\def\BibTeX{{\rm B\kern-.05em{\sc i\kern-.025em b}\kern-.08em
    T\kern-.1667em\lower.7ex\hbox{E}\kern-.125emX}}
\definecolor{DarkGreen}{rgb}{0.1,0.5,0.1}
\definecolor{DarkRed}{rgb}{0.5,0.1,0.1}
\definecolor{DarkBlue}{rgb}{0.1,0.1,0.5}
\newcommand{\bits}{\left\{0,1\right\}}
\newcommand{\cC}{\ensuremath{\mathcal{C}}}
\newcommand{\cE}{\ensuremath{\mathcal{E}}}
\newcommand{\cG}{\ensuremath{\mathcal{G}}}
\newcommand{\cQ}{\ensuremath{\mathcal{Q}}}
\newcommand{\F}{{\mathbb F}}
\newcommand{\cY}{\mathcal{Y}}
\newcommand{\PR}[1]{{\mathbb{P}}\left\{ #1\right\}}
\newcommand{\E}[1]{\mathbb{E}\left[ #1\right]}
\newcommand{\argmax}{\mathrm{argmax}}
\renewcommand{\epsilon}{\varepsilon}
\newtheorem{theorem}{Theorem} 
\newtheorem{lemma}[theorem]{Lemma} 
\newtheorem*{lemma*}{Lemma}
\newtheorem{definition}{Definition}
\newtheorem{remark}{Remark}
\newcommand{\indic}[1]{\mathbbm{1} [ #1] }
\begin{document}

\title{On the Error Probability of RPA Decoding of Reed-Muller Codes over BMS Channels 
}

\author{\IEEEauthorblockN{Dorsa Fathollahi}
\IEEEauthorblockA{{Department of Electrical Engineering} \\
{Stanford University, CA, USA}\\
email: \texttt{dorsafth@stanford.edu}
}
\and
\IEEEauthorblockN{V. Arvind Rameshwar}
\IEEEauthorblockA{{Department of Electrical Engineering} \\
{IIT Madras, India}\\
email: \texttt{arvind@ee.iitm.ac.in}}
\and
\IEEEauthorblockN{V. Lalitha}
\IEEEauthorblockA{{SPCRC} \\
{IIIT Hyderabad, India}\\
email: \texttt{lalitha.v@iiit.ac.in}}
}

\maketitle

\begin{abstract}
We analyze the performance of the Recursive Projection-Aggregation (RPA) decoder of Ye and Abbe (2020), for Reed-Muller (RM) codes, over general binary memoryless symmetric (BMS) channels. Our work is a significant generalization of a recent result of Rameshwar and Lalitha (2025) that showed that the RPA decoder provably achieves vanishing error probabilities for ``low-rate" RM codes, over the binary symmetric channel (BSC). While a straightforward generalization of the proof strategy in that paper will require additional, restrictive assumptions on the BMS channel, our technique, which employs an equivalence between the RPA projection operation and a part of the ``channel combining" phase in polar codes, requires no such assumptions. Interestingly, such an equivalence allows for the use of a generic union bound on the error probability of the first-order RM code (the ``base case" of the RPA decoder), under maximum-likelihood decoding, which holds for any BMS channel. We then exploit these observations in the proof strategy outlined in the work of Rameshwar and Lalitha (2025), and argue that, much like in the case of the BSC, one can obtain vanishing error probabilities, in the large $n$ limit (where $n$ is the blocklength), for RM orders that scale roughly as $\log \log n$, for all BMS channels.

% Our main contribution is an explicit upper bound on the probability of incorrect decoding, using the RPA decoder, over a binary symmetric channel (BSC). Key components of our analysis are explicit estimates of the error probability of maximum likelihood (ML) decoding of first-order RM codes and of the error probabilities during the aggregation phase of the RPA decoder. Importantly, we focus on the events where a \emph{single} iteration of the RPA decoder, in each recursive call, is sufficient for convergence.  Our results allow us to show that for RM codes with blocklength $N = 2^m$, the RPA decoder can achieve  vanishing error probabilities, in the large blocklength limit, for RM orders that grow roughly logarithmically in $m$.
\end{abstract}

% \begin{IEEEkeywords}
% component, formatting, style, styling, insert.
% \end{IEEEkeywords}

\section{Introduction}
Reed-Muller (RM) codes are a well-studied family of binary linear codes that are obtained by the evaluations of Boolean polynomials on the points of the Boolean hypercube \cite{reed,muller}. Recent breakthrough theoretical progress has shown that RM codes are  capacity-achieving for the binary erasure channel \cite{kud1}, and more generally, for BMS channels \cite{Reeves,abbesandon} (we refer the reader to the survey \cite{rm-survey} for a detailed treatment of the properties of RM codes). 

%Further, they share a close relationship \cite{arikanrmpolar-1,arikanrmpolar-2} with polar codes \cite{polar}---the first explicit family of (linear) codes that achieve capacity over binary-input memoryless symmetric (BMS) channels and also provably achieve high rates over many channels with memory \cite{shuvaltal,sasoglutal1}.

Over the past few decades, much work has been dedicated to finding practical (low-complexity) decoding algorithms for RM codes. The earliest such algorithm by Reed \cite{reed} is capable of correcting bit-flip errors up to half the minimum distance of the code. For the case of first-order RM codes, a Fast Hadamard Transform-based (or FHT-based) decoder was designed in \cite{fht2,fht}, which is an efficient implementation of a maximum likelihood (ML) decoding procedure. We also refer the reader to decoding algorithms for RM codes of higher code that display good performance at moderate blocklengths in the works \cite{sidelnikov,sakkour,dumer1,dumer2,dumer3,dumer_burnashev}. 
% We also refer the reader to several recent data-driven and heuristic approaches for decoding RM codes \cite{ko,jamali,redundant1,redundant2}.

A much more recent decoding algorithm, variants of which were shown to achieve near-ML performance at moderate blocklengths over the binary symmetric channel (BSC), is the Recursive Projection-Aggregation (RPA) decoder of Ye and Abbe \cite{rpa}. 
% Via extensive simulation studies, the work \cite{rpa} showed that the RPA algorithm outperforms the decoding algorithms in \cite{sidelnikov,sakkour} for second-order RM codes and the decoding algorithms in \cite{dumer1,dumer2,dumer3}, for selected moderate blocklength RM codes. 
Later works \cite{rpacomplex1,rpacomplex2} presented procedures for reducing the complexity of the RPA decoder, by making use of a subset of the subspaces employed by the RPA decoder, for projection. More recent work \cite{coset-error-it}, however, argues that limiting the number of subspaces used could suffer from significant performance loss, due to the presence of ``coset error patterns". In this work, we hence work with the original RPA decoder of Ye and Abbe, which makes use of all subspaces of a fixed dimension, for projection.
%We mention however that recent work \cite{coseterror} suggests that running the RPA decoder by employing an aggregation step that uses only a subset of the total number of subspaces of a fixed dimension could lead to poor performance, due to the decoder's potential failure in correcting so-called ``coset error patterns".

Our main efforts in this paper are directed towards obtaining \emph{analytical} performance guarantees for RPA decoding, via explicit bounds on the probability of error, over general binary memoryless symmetric (BMS) channels. Recent work \cite{rpa-analysis-isit} has obtained theoretical upper bounds on this error probability for the case when the channel is a binary symmetric channel (BSC). In this work, we generalize the results in \cite{rpa-analysis-isit} to the broad class of BMS channels, of which the BSC is a part. We mention however that such an extension does not follow straightforwardly from the proof strategy of \cite{rpa-analysis-isit} -- indeed, while somewhat direct modifications can be carried out by placing additional restrictions on the channel (such as requiring a bounded output alphabet size and bounded log-likelihood ratios, for all outputs) -- the problem of deriving general bounds for \emph{arbitrary} BMS channels requires different techniques. 

Our main result (Theorem \ref{thm:main}) shows that RPA decoding for general BMS channels guarantees vanishing error for RM orders roughly logarithmically in the parameter $m = \log n$, where $n$ is the code blocklength -- a scaling that is asymptotically identical to that obtained in \cite{rpa-analysis-isit}. Our proof proceeds via the identification of an equivalence between the projection operation in RPA decoding and a certain channel combining operation in polar code construction, which allows for the use of a standard union bound on the ML error probability of first-order RM codes, which form the ``base case" of the RPA decoder. Crucially, our analysis retains an important element of the proof strategy of \cite{rpa-analysis-isit}, which restricts attention to the case when one iteration of the RPA decoder suffices for convergence. It hence appears that relaxing this restriction and performing an analysis of RPA decoding in the presence of the correlations introduced via multiple iterations is key to obtaining asymptotic improvements in error probability.

% While it is now well-accepted that the RPA algorithm (as described in \cite{rpa}) demonstrates good performance in practice for low-rate (in particular, fixed-order) RM codes, a formal analysis has been largely lacking in the literature. In this paper, we make progress towards this goal, by obtaining explicit upper bounds on the error probabilities of the RPA decoder, when the channel used is the BSC. Importantly, we prove that for sufficiently large blocklengths, with overwhelmingly high probability, the RPA decoder recovers the correct input codeword in RM codes of order that scales roughly logarithmically in the parameter $m$, where $N = 2^m$. 
% This result can hence be viewed as supplementing the results in \cite{asw,sberlo}, where it was shown that selected RM codes of vanishing rate have error probabilities that decay to zero in the large blocklength limit. Our result, although weaker in terms of the orders of the RM codes of interest, establishes that an \emph{explicit}, low-complexity decoder (namely, the RPA algorithm) can in fact be used to achieve vanishing error probabilities.

% \dorsa{ add that the original technique for BSC can only be modified to work for channels with bounded alphabet. Therefore here we instead use bhattacharyaa parameter to upperbound the block error probability }

\section{Notation and Preliminaries}
\subsection{Notation}
\label{sec:notation}
%For a positive integer $n$, we use $[n]$ as shorthand for $\{1,2,\ldots,n\}$. 
Random variables are denoted by capital letters, e.g., $X, Y$, and small letters, e.g., $x, y$, denote their instantiations. Log-likelihood vectors are however denoted as $L$, following standard notation.%Sets are denoted by calligraphic letters, e.g., $\mathcal{X}, \mathcal{Y}$. 
The notation $\mathbf{0}$ denotes the all-zeros vector, whose length can be inferred from the context.
%Notation such as $P(x), P(y|x)$ are used to denote the probabilities $P_X(x), P_{Y|X}(y|x)$, when it is clear which random variables are being referred to. The notations ${H}(X) :=  \mathbb{E}[-\log P(X)], {H}(Y\mid X):= \mathbb{E}[-\log P(Y\mid X)]$, and $I(X;Y) := H(Y)-H(Y\mid X)$ denote the entropy of $X$, conditional entropy of $Y$ given $X$, and mutual information between $X$ and $Y$, respectively. Given any real $p\in [0,1]$, we let $h_b(p):=-p\log p -(1-p)\log(1-p)$, where $h_b$ denotes the binary entropy function; here, the base of the logarithm will be made clear from the context. 
%The notation Ber$(p)$ refers to the Bernoulli distribution with parameter $p$, where $p\in [0,1]$. 
%We use the notation Ber$(p)$ to denote a Bernoulli random variable with parameter $p\in [0,1]$, and Ber$^{\otimes n}(p)$ to denote a length-$n$ random vector, each of whose coordinates is an independent and identically distributed (i.i.d.) Ber$(p)$ random variable. 
% For a vector $\mathbf{a}\in \{0,1\}^n$, we denote by $\mathbf{a}^\pm$ as that vector that obeys $a^\pm_i = (-1)^{a_i}$, for each $i\in [n]$. For vectors $\textbf{a},\textbf{b}\in \{0,1\}^n$, we define $\textbf{a}\oplus \mathbf{b}:= \left(a_i+b_i\ (\text{mod $2$}):\ i\in [n]\right)$. For a vector $\mathbf{a}\in \{0,1\}^n$, we denote its Hamming weight by $w(\mathbf{a})$. 
Natural logarithms are denoted as $\ln$. The notations $O(\cdot), o(\cdot), \Omega(\cdot),\omega(\cdot)$ are used to refer to members of the standard  Bachmann–Landau family of asymptotic notations. The indicator function $\indic{\cdot}$ takes the value $1$ when the argument is true, and $0$, otherwise.

\subsection{Reed-Muller Codes}
Consider the polynomial ring $\mathbb{F}_2[x_1,x_2,\ldots,x_m]$ in $m$ variables. For a polynomial $f\in \mathbb{F}_2[x_1,x_2,\ldots,x_m]$ and a binary vector $\mathbf{z} = (z_1,\ldots,z_m)\in \mathbb{F}_2^m$, we write {$f(\mathbf{z})=f(z_1,\ldots,z_m)$} as the evaluation of $f$ at $\mathbf{z}$. Let $\mathbb{F}_2^{\leq r}[x_1,x_2,\ldots,x_m]$ denote the collection of polynomials of degree at most $r$. The evaluation points are ordered according to the standard lexicographic order on strings in $\mathbb{F}_2^m$, i.e., if $\mathbf{z} = (z_1,\ldots,z_m)$ and $\mathbf{z}^{\prime} = (z_1^{\prime},\ldots,z_m^{\prime})$ are two evaluation points, then, $\mathbf{z}$ occurs before $\mathbf{z}^{\prime}$ iff for some $i\geq 1$, we have $z_j = z_j^{\prime}$ for all $j<i$, and $z_i < z_i^{\prime}$. Now, let Eval$(f):=\left({f(\mathbf{z})}:\mathbf{z}\in \mathbb{F}_2^m\right)$ be the evaluation vector of $f$, where the coordinates $\mathbf{z}$ are ordered according to the standard lexicographic order. 

\begin{definition}[see Ch. 13 in \cite{mws}, or \cite{rm_survey}]
	{For $0\leq r\leq m$}, the $r^{\text{th}}$-order binary Reed-Muller code RM$(m,r)$ is defined as
	\[
	\text{\normalfont RM}(m,r):=\{\text{\normalfont Eval}(f): f\in \mathbb{F}_2^{\leq r}[x_1,x_2,\ldots,x_m]\}.
	\]
	% where $\text{deg}(f)$ is the degree of the largest monomial in $f$, and the degree of a monomial $\prod_{j\in S: S\subseteq [m]} x_j$ is simply $|S|$. 
\end{definition}
All through, we set $n:=2^m$.
 \subsection{BMS Channels and RPA Decoding}
 We refer the reader to \cite[Ch. 4]{mct} for the definition of binary memoryless symmetric (BMS) channels. 
 \begin{definition}
 The Bhattacharyya parameter $Z(W)$ of a BMS channel $W$ with output alphabet $\mathcal{Y}$ is defined as 
    \[
    Z(W):=  \sum_{y\in \mathcal{Y}} \sqrt{P_{Y|0}(z|0)P_{Y|1}(z|1)},
    \]
    when $\mathcal{Y}$ is finite, with the conditional p.m.f.s above replaced by corresponding probability densities and the summation above replaced by an integral, when the channel laws correspond to continuous probability distributions.
\end{definition}
 The RPA decoding algorithm for general BMS channels is briefly recapitulated as Algorithm \ref{alg:RPA-BMS-r1} (see also \cite[Alg. 3]{rpa})\footnote{Algorithm 3 of \cite{rpa} in fact is stated for general binary-input memoryless channels that are not necessarily symmetric.}. In all that follows in this paper, we assume that the RPA decoder uses one-dimensional subspaces for projection.
 
 In Algorithm \ref{alg:RPA-BMS-r1}, $L$ denotes the vector of log-likelihood ratios (LLRs) with $L = (L(z):\ z\in \{0,1\}^m)$, where $L(z):=\log \frac{W(y_z|0)}{W(y_z|1)}$, and $\mathbf{y} = (y_z:\  z\in \{0,1\}^m)$ is the received vector at the end of the BMS channel. In Step 2, we use
 \begin{align*}
     L^{(v)}(T)&:=\log\big(\exp(L(z){+}L(z\oplus v))+1\big)\\
                &\ \ \ \ -\log\big(\exp(L(z))+\exp(L(z\oplus v))\big).
  \end{align*}
  We denote $L^{(v)} := (L^{(v)}(T): T \in \{0,1\}^m / \langle v\rangle)$ for some fixed ordering among cosets $T$. Similarly $\widehat{y}^{(v)} := \widehat{y}^{(v)}(T) : T \in \{0,1\}^m / \langle v\rangle)$.
  In Step 4, the subroutine $\mathrm{FHTDecoder}$ refers to the standard Fast Hadamard Transform decoder \cite{fht2,fht} for ML decoding of first-order RM codes.
 Furthermore, in this paper, we do not set a ``tolerance threshold" $\theta$ as in the original work \cite{rpa}, and instead only run the decoder for a fixed number $N_\text{max}$ of iterations. However, for the purpose of analysis, as in \cite{rpa-analysis-isit}, we restrict attention to the setting where \emph{one} iteration of RPA decoding suffices for returning the correct decoded estimate.

 Given the code RM$(m,r)$, let $C$ denote a (random) codeword that is drawn uniformly at random from the code. Now, let $\mathbf{Y}$ be the received sequence at the end of the BMS channel $W$ and let $\widehat{C}$ denote the estimate of the input codeword obtained by using the RPA decoder in \cite[Alg. 3]{rpa}. The probability of error of RM$(m,r)$ under RPA decoding is then defined as $P_\text{err}(\text{RM}(m,r)):=\Pr\left[\widehat{C}\neq C\right]$.

 \begin{definition}[RPA Recursion Tree]
The RPA decoding of $\text{\normalfont RM}(m, r)$ induces a recursion tree $\mathcal{T}$ with the following structure:
\begin{itemize}
    \item The tree has depth $r$, with orders indexed by $i \in \{1, 2, \ldots, r\}$.
    % \item Each node $v$ in the tree is associated with an RM code
    \item Nodes are uniquely identified by a pair $v = (i, j)$ where:
    \begin{itemize}
        \item $i \in \{1, 2, \ldots, r\}$ is the \emph{height} of the node, where the root is at height $r$
        \item $j \in \{1, 2, \ldots, N_i\}$ is the \emph{index} at height $i$; there exists a one-one correspondence between $j$ and the subspace $v$ used for projection
    \end{itemize}
    \item A node $v = (i, j)$ at order $i$ corresponds to the RM code $\text{\normalfont RM}(m - r + i, i)$.
\end{itemize}
\end{definition}
Owing to the close relationship between the recursions in a single iteration of the RPA decoder and the tree structure defined above, we let $L_i^{(j)}$ denote the LLR vector $L^{(v)}$ at the $(r-i)^\text{th}$ step of the recursion, using the subspace $v$ associated with index $j$ of the node $(i,j)$, for $i\in \{1,\ldots,r\}$. The estimates $\widehat{L}_i^{(j)}$ and $\widehat{y}_i^{(j)}$ are similarly defined.

The next lemma considers the channel
\begin{align*}
 & W^-(y_1,y_2 \mid s)
 \\ & \;=\;
  \frac{1}{2}
  \sum_{\substack{u_1,u_2 \in \bits \\ u_1 \oplus u_2 = s}}
    W(y_1 \mid u_1)\,W(y_2 \mid u_2),
  \quad s \in \bits,
\end{align*}
which, as we argue, is precisely the channel induced via the projection operation.
 \begin{algorithm}[t]
\caption{RPA Decoder for RM$(m,r)$}
\label{alg:RPA-BMS-r1}
\textbf{Input:} LLR vector $L\in\mathbb{R}^{n}$, max iter. $N_{\max}$\\
\textbf{Output:} $\widehat{c}\in\{0,1\}^{n}$.

\begin{algorithmic}[1]
\For{$j=1$ \textbf{to} $N_{\max}$}
  \State \textbf{Projection:} For each $v\in \mathbb{F}_2^m\!\setminus\!\{0\}$ and each coset $T=\{z,z\oplus v\}$, compute $L^{(v)}(T)$.
  
  \If{$r\geq 2$} Compute $\widehat{y}^{(v)} \leftarrow \mathrm{RPA}(m{-}1,r{-}1,L^{(v)})$.
  \Else\ Set $\widehat{c}\leftarrow \mathrm{FHTDecoder}(L)$ \textbf{break}
  % \State \Return $\widehat{c}$
\EndIf

  \State \textbf{Aggregation:} For each $z\in\mathbb{F}_2^m$, compute
  \[
     \widehat{L}(z) \gets \frac{1}{2^m-1}\sum_{v\neq 0}\Bigl(1-2\,\widehat{y}^{(v)}([z+\langle v\rangle])\Bigr)\,L(z\oplus v).
  \]
  % \If{$|\widehat{L}(z)-L(z)|\le \theta |L(z)|$ for all $z$} \textbf{break} \EndIf
  \State Set $L\gets \widehat{L}$.
\EndFor
\State Set $\widehat{c}(z)\gets \indic{L(z)<0}$ for all $z$; \Return $\widehat{c}$.
\end{algorithmic}
\end{algorithm}
\begin{lemma}\label{lem:induced-channel}
Let \(W \colon \bits \to \cY\) be a binary-input memoryless symmetric (BMS) channel.  
Then:
\begin{enumerate}
    \item $W^-$ is a BMS channel. 
    \item 
    % If independent bits $ U_1 , U_2$ are transmitted via two independent copies of $W$ with outputs $ Y_1 , Y_2$  and we define $ S = U_1 \oplus U_2  $, then we have that ,
% \begin{align*}
%   &  \PR{Y_1 = y_1,\,Y_2 = y_2 \mid S = s}
%    \\ & \;=\;
%     W^-(y_1,y_2 \mid s),
%     \quad
%     s \in \bits.
% \end{align*}
  For any nonzero \(v \in \F_2^m\), the pair \(\bigl(Y(z),Y(z\oplus b)\bigr)\) conditioned on \(C(z) \oplus C(z\oplus b)\) has distribution \(W^-\).
  \item The Bhattacharyya parameter satisfies \(Z(W^-)\,\le\,1 - \bigl(1 - Z(W)\bigr)^2\).
\end{enumerate}
\end{lemma}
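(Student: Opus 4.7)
I would tackle the three parts in order, using a different technique for each.

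\textbf{Part~(1).} I would invoke the standard involutive characterization of BMS channels: there exists an involution $\pi \colon \mathcal{Y} \to \mathcal{Y}$ with $W(y \mid 0) = W(\pi(y) \mid 1)$. Extending to the product alphabet via $\pi'(y_1, y_2) := (\pi(y_1), y_2)$ gives an involution on $\mathcal{Y} \times \mathcal{Y}$; substituting into the defining two-term sum of $W^-$ and relabeling the summation index exchanges the roles of $W(y_1 \mid 0)$ and $W(y_1 \mid 1)$, yielding $W^-(\pi'(y_1,y_2) \mid 1) = W^-(y_1, y_2 \mid 0)$, which is exactly the BMS property for $W^-$.

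\textbf{Part~(2).} I would write the conditional probability by marginalizing over $(C(z), C(z \oplus v))$: by channel memorylessness, $\Pr[(Y(z), Y(z\oplus v)) = (y_1,y_2) \mid C(z) \oplus C(z \oplus v) = s]$ equals $\sum_{u_1 \oplus u_2 = s} W(y_1 \mid u_1) W(y_2 \mid u_2) \cdot \Pr[(C(z), C(z\oplus v)) = (u_1, u_2) \mid C(z)\oplus C(z\oplus v) = s]$. The key observation is that, since $C$ is uniform over $\mathrm{RM}(m,r)$ with $r \geq 1$, the code contains $\mathrm{RM}(m,1)$ whose generator rows restricted to any two distinct coordinates are linearly independent; hence the marginal $(C(z), C(z \oplus v))$ is uniform on $\{0,1\}^2$, so the displayed conditional probability equals $1/2$ on each of the two pairs with XOR $s$, and the sum collapses exactly to $W^-(y_1, y_2 \mid s)$.

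\textbf{Part~(3).} The goal is a pointwise upper bound inside the sum defining $Z(W^-)$. I would set $a := \sqrt{W(y_1 \mid 0)}$, $b := \sqrt{W(y_1 \mid 1)}$, $c := \sqrt{W(y_2 \mid 0)}$, $d := \sqrt{W(y_2 \mid 1)}$, reducing the integrand to $\tfrac{1}{2}\sqrt{(a^2 c^2 + b^2 d^2)(a^2 d^2 + b^2 c^2)}$. Using the rewrites $a^2 c^2 + b^2 d^2 = (ac+bd)^2 - 2abcd$ and $a^2 d^2 + b^2 c^2 = (ad+bc)^2 - 2abcd$, and introducing $P := ac+bd$, $Q := ad+bc$, $R := 2abcd$, I would target the inequality $\sqrt{(P^2 - R)(Q^2 - R)} \leq PQ - R$. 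Both sides are nonnegative (AM--GM gives $P^2, Q^2 \geq 4abcd = 2R$, and $PQ \geq 2R$), so squaring is legitimate and reduces the claim to the trivial $(P - Q)^2 \geq 0$. Summing the resulting pointwise bound $\tfrac{1}{2}\sqrt{(a^2c^2+b^2d^2)(a^2d^2+b^2c^2)} \leq \tfrac{1}{2}PQ - abcd$ over $(y_1, y_2)$, and using $\sum_y a^2 = \sum_y b^2 = 1$ together with $\sum_y ab = Z(W)$, the main term collapses to $2\,Z(W)$ while the subtracted term equals $Z(W)^2$, yielding $Z(W^-) \leq 2\,Z(W) - Z(W)^2 = 1 - (1 - Z(W))^2$.

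\textbf{Main obstacle.} I expect the crux to be the algebra in part~(3): a direct expansion of the square root produces cross terms that do not separate, and only the $(P,Q,R)$ rewrite followed by squaring reduces the whole bound to $(P - Q)^2 \geq 0$. Parts~(1) and~(2) are essentially bookkeeping once the BMS involution and the uniform two-coordinate marginal of the random codeword are identified.
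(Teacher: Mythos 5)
Your proposal is correct, but it is substantially more self-contained than the paper's proof, which disposes of parts (1) and (3) by citing Ar{\i}kan's polar coding results (Propositions 13 and 5 of that work, respectively) and only argues part (2) directly. Your part (1) reconstructs the standard involution argument behind that citation, and your part (3) reconstructs the algebraic proof of $Z(W^-)\le 2Z(W)-Z(W)^2$; I checked the $(P,Q,R)$ manipulation and it is sound --- squaring $\sqrt{(P^2-R)(Q^2-R)}\le PQ-R$ reduces to $R(P-Q)^2\ge 0$, and the sums over $(y_1,y_2)$ do collapse to $\tfrac12(4Z(W))-Z(W)^2=1-(1-Z(W))^2$ using $\sum_y ab=Z(W)$ and normalization. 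For part (2) your route matches the paper's in spirit (both reduce to the pair $(U_1,U_2)$ being uniform on $\{0,1\}^2$ and then invoke memorylessness), but you supply a justification the paper omits: that the restriction of a uniform codeword of $\mathrm{RM}(m,r)$, $r\ge 1$, to the two coordinates $z$ and $z\oplus v$ is uniform, via the linear independence of the corresponding columns of the $\mathrm{RM}(m,1)$ generator matrix (equivalently, the dual distance exceeding $2$). The paper simply posits $U_1,U_2\sim\mathrm{Bern}(1/2)$ independent, so your version is the more complete one on this point. What the paper's approach buys is brevity and reliance on well-established results; what yours buys is a proof readable without the polar coding literature, at the cost of the algebra in part (3) that you correctly identify as the crux.
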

\begin{proof}
  (1) Follows directly from \cite[Prop. 13]{polar}.\\
%   Since \(W\) is BMS, there is a permutation \(\pi\colon\cY\to\cY\) with \(\pi=\pi^{-1}\) and \(W(y\mid 1)=W(\pi(y)\mid 0)\) for all \(y\).  Defining \(\pi^-(y_1,y_2)\triangleq(\pi(y_1),\pi(y_2))\) shows that 
% \[
%   W^-(y_1,y_2\mid 1)
%   \;=\;
%   W^-\!\bigl(\pi^-(y_1,y_2)\mid 0\bigr),
% \]
% so \(W^-\) is itself BMS.
% \\
(2)  Let \(U_1,U_2\sim\mathrm{Bern}(1/2)\) be independent and set \(S=U_1\oplus U_2\).  It can easily be checked that $\PR{Y_1=y_1,Y_2=y_2 \mid S=s} = W^-(y_1,y_2 \mid s),$ for $s\in \{0,1\}$.
% \begin{align*}
%   & \PR{Y_1=y_1,Y_2=y_2 \mid S=s}
%   \;=\; \\
%   & \frac{\sum_{u_1\oplus u_2 = s} \PR{U_1=u_1,U_2=u_2}\,W(y_1\mid u_1)\,W(y_2\mid u_2)}{\PR{S=s}}
%   \\ & \;=\;
%   \tfrac{1}{2}\sum_{\substack{u_1,u_2\in\bits\\u_1\oplus u_2=s}}W(y_1\mid u_1)\,W(y_2\mid u_2),
% \end{align*}
% which is exactly \(W^-(y_1,y_2\mid s)\).  
Any projection \(v\neq \mathbf{0}\) selects a coset \(\{z,z\oplus v\}\), and because the channel uses are memoryless, the distribution of \(\bigl(Y(z),Y(z\oplus b)\bigr)\) given the parity \(\cC(z)\oplus \cC(z\oplus b)\) is \(W^-\).  Thus the induced channel for projection along \(b\) is \(W^-\).
\\
(3)  Follows directly from \cite[Prop. 5]{polar}.
% The channel \(W^-\) is the familiar “minus” transform from polar coding, and it is known from proposition~5 in \cite{polar} that its Bhattacharyya parameter obeys \(Z(W^-) \le 1 - (1 - Z(W))^2\).  This completes the proof. 
\end{proof}
Following Lemma \ref{lem:induced-channel}, we see that the channel induced by projecting along any one-dimensional subspace \(\{0,v\}\), which we denote $W^{(v)}$, is exactly \(W^-\). Following previous notation, we let $W_i$ stand for the (common) channel induced after any projection at the $(r-i+1)^\text{th}$ step of the RPA recursion, $i\in \{1,\ldots,r\}$. Let $ Z_i$ be the Bhattacharyya parameter of channel $W_i$.
\begin{lemma}\label{lem:bhat-rec} 
    We have that $Z_i \leq 1- ( 1- Z_{i+1})^2.$
    % \begin{equation*}
        
    % \end{equation*}
\end{lemma}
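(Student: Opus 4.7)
The plan is to identify $W_i$ as the ``minus'' combination of $W_{i+1}$ and then invoke part (3) of Lemma \ref{lem:induced-channel} in a single step.

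First, I would establish by induction on the recursion depth that $W_i = (W_{i+1})^-$ for every $i \in \{1,\ldots,r-1\}$. At the root (height $r$), the channel is $W$ itself; a one-dimensional projection, by part (2) of Lemma \ref{lem:induced-channel}, produces LLRs whose conditional law (given the corresponding projected codeword bit) is exactly $W^-$. Part (1) of the same lemma ensures $W^-$ is again BMS, so the configuration at the next level of the recursion (the projected RM code, viewed as inputs to a memoryless channel) is structurally identical to the previous level, just with a new BMS channel $W_{r-1} = W^-$. Applying part (2) again to this new setup shows that a further projection at this level induces the channel $(W_{r-1})^-$, which by definition is $W_{r-2}$. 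Iterating this argument down the recursion tree yields the claimed identity $W_i = (W_{i+1})^-$ at every height.

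Second, once this identification is in hand, the recursion on Bhattacharyya parameters is immediate: applying part (3) of Lemma \ref{lem:induced-channel} to the BMS channel $W_{i+1}$ gives
\[
Z_i \;=\; Z\bigl(W_i\bigr) \;=\; Z\bigl((W_{i+1})^-\bigr) \;\leq\; 1 - \bigl(1 - Z(W_{i+1})\bigr)^2 \;=\; 1 - (1-Z_{i+1})^2,
\]
which is precisely the desired inequality.

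The main (and essentially only) obstacle is the bookkeeping in the first step: verifying that the conditional distribution of the projected LLRs at an \emph{arbitrary} recursion depth -- not just the top level -- genuinely corresponds to a memoryless symmetric channel given by repeatedly applying the minus operation. This reduces to observing that the projection formula defining $L^{(v)}$ at a deeper node depends only on pairs of LLRs in the same functional way as at the top level, and that memorylessness and symmetry are preserved at each stage thanks to part (1) of Lemma \ref{lem:induced-channel}. Once this is accepted, the lemma follows by a clean application of part (3).
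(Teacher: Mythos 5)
Your proposal is correct and follows essentially the same route as the paper: identify the channel at height $i$ as $(W_{i+1})^-$ via parts (1) and (2) of Lemma \ref{lem:induced-channel}, then apply part (3) to bound the Bhattacharyya parameter. You are somewhat more explicit than the paper about the inductive bookkeeping showing that the minus-combination structure persists at every level of the recursion, but this is a matter of detail rather than a different argument.
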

\begin{proof}

Fix any node at height \( i+1 \), and consider the projection along a one-dimensional subspace \(B = \{0 ,v \} \).  By Lemma~\ref{lem:induced-channel} (Items 2 and 3), we obtain that 
% the channel induced by this projection denoted as $W_b$ depends only on the channel $W_{i+1} $ and corresponds to the polar minus transform $ W_{i+1}^-$. therefore,
% \[
% Z(W_b) = Z(W_{i+1}^-).
% \]
% From item~3 in Lemma~\ref{lem:induced-channel} we can conclude that 
$
Z(W_{i+1}^-)\;\le\;1-\bigl(1-Z(W_{i+1})\bigr)^2.
$
Since \(Z_i=Z(W_i)=Z(W^{(v)})\), we obtain the desired bound.
% \(Z_{i+1}=Z(W_{i+1})\) and \(Z_i=Z(W_i)=Z(W_b)\) for the induced projection channel at level \(i\),
% we obtain our disered bound \[
% Z_i \;\le\; 1-\bigl(1-Z_{i+1}\bigr)^2.
% \]
% completing the proof.
\end{proof}
\begin{lemma}\label{lem:bhat-unroll}
We have that for any $i\in \{1,\ldots,r\}$, $ Z_i  \leq 1- (1- Z_r)^{2^{r-i}}$.
    % Let $Z_r = Z(W)$ where $W$ is the original channel  that the codewords of  $RM(m,r)$ are sent over. 
\end{lemma}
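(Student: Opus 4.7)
The plan is to prove the inequality by downward induction on the height index $i$, starting from $i=r$ and decreasing, using Lemma~\ref{lem:bhat-rec} as the single-step recursion. The base case $i=r$ is immediate, since the claimed inequality reads $Z_r \le 1-(1-Z_r)^{2^0} = Z_r$, which holds trivially.

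For the inductive step, I would fix $i < r$ and assume the bound at height $i+1$, namely
\[
  Z_{i+1} \;\le\; 1 - (1 - Z_r)^{2^{r-i-1}},
\]
or equivalently $1 - Z_{i+1} \ge (1 - Z_r)^{2^{r-i-1}}$. By Lemma~\ref{lem:bhat-rec}, we have $Z_i \le 1 - (1 - Z_{i+1})^2$, i.e.\ $1 - Z_i \ge (1 - Z_{i+1})^2$. Combining these two inequalities (and noting that $1 - Z_{i+1} \ge 0$, so squaring preserves the inequality) yields
\[
  1 - Z_i \;\ge\; \bigl((1 - Z_r)^{2^{r-i-1}}\bigr)^2 \;=\; (1 - Z_r)^{2^{r-i}},
\]
which rearranges to the desired bound $Z_i \le 1 - (1-Z_r)^{2^{r-i}}$.

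There is really no obstacle here: the argument is a one-line induction telescoping the per-step recursion $1 - Z_i \ge (1 - Z_{i+1})^2$. The only minor point worth stating explicitly is that $Z_{i+1} \in [0,1]$, so that $1 - Z_{i+1} \ge 0$ and the squaring step is monotone; this follows from the fact that each $W_i$ is a bona fide BMS channel (Item~1 of Lemma~\ref{lem:induced-channel}), for which the Bhattacharyya parameter lies in $[0,1]$.
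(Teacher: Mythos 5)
Your proof is correct and follows essentially the same route as the paper: iterating the one-step recursion $1-Z_i \ge (1-Z_{i+1})^2$ from Lemma~\ref{lem:bhat-rec} down from $i=r$. If anything, your version is slightly more careful than the paper's, which informally treats the substitution $Y_i = Y_{i+1}^2$ as an equality, whereas you correctly track the inequality direction and justify the monotonicity of squaring via $Z_{i+1}\in[0,1]$.
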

\begin{proof}
    From Lemma~\ref{lem:bhat-rec}, we know that $ Z_i \leq 1-(1-Z_{i+1})^2 $.  Define $Y_i = 1 - Z_i$. Then, $ Y_i = (1-Z_{i+1})^2 = Y_{i+1}^2$. Therefore, $Z_i = 1-Y_i = 1-Y_r^{2^{r-i}} = 1-(1-Z_r)^{2^{r-i}}$.
\end{proof}

\section{Main Result}

As in \cite{rpa-analysis-isit}, we use \cite[Prop. 2]{rpa} to focus our analysis on the case when the input codeword is fixed to be $C = \mathbf{0}\in \{0,1\}^N$, since, via the symmetry of the channel, we have that $P_\text{err}(\text{RM}(m,r))$ equals the error probability of the all-zeros codeword. In what follows, we let $Z=:Z(W)\in (0,1)$ denote the Bhattacharyya parameter of the BMS channel $W$ of interest\footnote{We restrict our attention in this paper to ``non-degenerate" BMS channels whose Bhattacharyya parameters are strictly bounded away from $0$ and $1$.}.
\begin{theorem}\label{thm:main}
Let
$
\lambda := -\ln(1-Z)\in(0,\infty).
$
For
$
r < \log_2 m -\log_2 \lambda,
$
we have $P_\text{err}(\text{RM}(m,r))\xrightarrow{m\to \infty} 0$.
\end{theorem}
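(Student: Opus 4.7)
My plan is to combine Lemma \ref{lem:induced-channel} (which identifies the projected channel at any level of the recursion tree as $W^-$) with Lemma \ref{lem:bhat-unroll} and a union bound over the leaves of the recursion tree $\mathcal{T}$, following the overall strategy of \cite{rpa-analysis-isit}. First, channel symmetry (\cite[Prop.~2]{rpa}) lets me condition on the transmitted codeword being $C=\mathbf{0}$. Following \cite{rpa-analysis-isit}, I would further restrict to a single iteration of Algorithm \ref{alg:RPA-BMS-r1}; under the all-zeros hypothesis, a one-iteration RPA returns $\mathbf{0}$ provided (i) the FHT base-case decoder at every leaf of $\mathcal{T}$ correctly returns the all-zeros first-order codeword, and (ii) at every internal node, the sign-thresholded aggregated LLR $\widehat{L}$ agrees with $\mathbf{0}$.

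The bulk of the argument is then to control event (i) via a union bound over leaves:
\[
 \Pr[\text{(i) fails}] \;\le\; N_\text{leaves}\cdot P_\text{ML}\bigl(\text{RM}(m-r+1,1),\,W_1\bigr).
\]
Counting children level by level in $\mathcal{T}$ gives $N_\text{leaves}\le \prod_{i=2}^r 2^{m-r+i}\le 2^{(r-1)m}$. For the second factor, Lemma \ref{lem:induced-channel} identifies the effective channel at every leaf as the BMS channel $W_1$, so the pairwise-error Bhattacharyya bound, together with the fact that every nonzero codeword of the first-order RM code has weight $\ge 2^{m-r}$ and there are at most $2^{m-r+2}$ codewords in total, yields $P_\text{ML}\le 2^{m-r+2}\,Z_1^{2^{m-r}}$. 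Substituting $Z_1\le 1-e^{-\lambda 2^{r-1}}$ from Lemma \ref{lem:bhat-unroll} and using $1-x\le e^{-x}$ then gives
\[
 \Pr[\text{(i) fails}]\;\le\; 2^{rm-r+2}\,\exp\!\Bigl(-2^{m-r}\,e^{-\lambda 2^{r-1}}\Bigr).
\]

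To close the proof, I would show that the hypothesis $r<\log_2 m-\log_2\lambda$ (i.e., $2^r<m/\lambda$) sends the right-hand side to $0$. The hypothesis gives $\lambda 2^{r-1}<m/2$ and $2^{m-r}>(\lambda/m)\,2^m$, so $2^{m-r}e^{-\lambda 2^{r-1}}\ge (\lambda/m)\,\exp\!\bigl(m(\ln 2-\tfrac12)\bigr)$, which grows exponentially in $m$; meanwhile $(rm-r+2)\ln 2=O(m\log m)$ in this regime (since $r=O(\log m)$), so the exponent of the bound above tends to $-\infty$. Event (ii) is absorbed in passing: given correct children, each internal aggregation at a node of height $i$ reduces to a sign test on the sum of $2^{m-r+i}-1$ i.i.d.\ LLRs from the induced BMS channel $W_i$, which by the classical Bhattacharyya bound for repetition codes contributes an error of at most $Z_i^{2^{m-r+i}-1}$ per coordinate; a union bound over coordinates and over internal nodes of $\mathcal{T}$ yields a contribution dominated by the event (i) bound.

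I expect the main conceptual obstacle to be the subtlety I have bracketed away in the first paragraph: justifying the one-iteration reduction, and in particular handling event (ii) uniformly for every BMS channel. The key new idea here, relative to \cite{rpa-analysis-isit}, is that by routing the entire analysis through the Bhattacharyya parameter $Z$ --- via Lemma \ref{lem:induced-channel}'s identification of the RPA projection with the polar $W\mapsto W^-$ combining --- the argument never has to reason about individual LLR values or output symbols, so the proof applies uniformly for every BMS channel, including those with continuous or unbounded output alphabets.
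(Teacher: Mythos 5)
Your proposal is correct and follows essentially the same route as the paper: after reducing to the all-zeros codeword and one iteration, the paper's Lemmas \ref{lem:Q1}--\ref{lem:Qi-unroll} are precisely your union bound over the recursion tree (leaf FHT failures bounded by $(2^{m-r+2}-1)Z_1^{2^{m-r}}$, plus per-node aggregation failures $N_i Z_i^{N_i-1}$ given correct children), merely organized as a recurrence that is then unrolled, and the asymptotic analysis via $Z_t^{k}\le e^{-k(1-Z_t)}$ and $\lambda 2^{r-1}<m/2$ is identical. The only cosmetic difference is that you invoke the generic union--Bhattacharyya bound for the first-order leaf codes, whereas the paper gives a direct character-sum/Chernoff derivation while explicitly noting the two are equivalent.
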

\begin{remark}
    Consider the setting where the BMS channel $W$ is the BSC$(p)$, with the cross-over probability $p\in (0,1/2)$. Theorem III.1 of \cite{rpa-analysis-isit} shows that for $r<\ln m + \ln\left(\frac{\ln 2}{\ln (1-2p)}\right)$, the RPA decoder achieves vanishing error probabilities over the BSC$(p)$. Using the fact that for this channel, we have $Z = 2\sqrt{p(1-p)}$, it can be checked via numerical comparisons that the claim in \cite{rpa-analysis-isit} is stronger than Theorem \ref{thm:main}, for all $p\in (0,1/2)$. However, both claims provide identical asymptotic guarantees on the growth rate of $r$ with $m$ (i.e., $r$ growing roughly logarithmically in $m$) for vanishing error probabilities under RPA decoding.
\end{remark}
\section{Helper Lemmas}
In order to prove Theorem \ref{thm:main}, we shall first establish an upper bound on the error probability at each stage of recursion in terms of the error probabilities of the previous stages.  We then unroll this recursion to obtain the final block error probability. But first, we require some more notation. For $1\leq i\leq r$, let $N_i:= 2^{m-r+i}$. Thus, the number of subspaces used for projection at any node at height $i$ is $N_i-1$.
% Further, let $Z_i$, $1\leq i\leq r$, denote the Bhattacharyya parameter corresponding to the projected channel at level $i$; importantly, we have $Z_r = Z$.

% To ease the proof we define the following events depending on the randomness of the noisy channel. 
\begin{definition}
     For any $1\leq i\leq r$ and $j\leq N_i$, let $ \cQ_i^{(j)}$ be the event that the decoded estimate $\widehat{Y}_i^{(j)}$ at node $(i,j)$ of $\mathcal{T}$ is incorrect (does not equal $\mathbf{0}$). For ease of reading, we write $ \cQ_i^{(j)}$ as $\cQ_i$, when the index $j$ is clear from the context. 
\end{definition}
We are interested in obtaining an upper bound on $\PR{ \cQ_r}$, which directly yields an upper bound on $P_\text{err}(\text{RM}(m,r))$.
% \begin{remark}
% The probability $\Pr[\cQ_i^{(j)}]$ is the block error probability for node $(i, j)$. Throughout this paper, we use the notation $\cQ_i$ when a claim holds for all $\cQ_i^{(j)}$, i.e., for all decoding instances at order $i$.
% \end{remark}
 
\begin{definition}

For a node $(i, j)$ at height $i \geq 2$, define the event $\cG_{i}^{(j)} $ as the event that all children of node $ (i,j) $ with order $i+1$ are decoded correctly.  
\end{definition}
\subsection{Recurrence Relations for $\PR{ \cQ_i}$}
We proceed with deriving a recurrence relation for $\PR{ \cQ_i}$, $1\leq i\leq r$, in terms of $\PR{ \cQ_j}$, $j<i$; in all that follows, we implicitly condition on the fact that the all-zeros codeword was transmitted. We then ``unroll" this recurrence to yield a closed-form upper bound on $\PR{ \cQ_i}$. For any event $\mathcal{E}$, we let $\overline{\mathcal{E}}$ denote its complement, where the universe can be inferred from the context.

First, we obtain an upper bound on $\PR{ \cQ_1}$, which forms the ``base case" of our recursive analysis of error probabilities. While the upper bound can also be obtained via the application of a standard union bound argument for the ML error probability (see, e.g., \cite[Sec. 2.1]{sason-shamai} or \cite[Problem 1.21]{mct}), we provide a direct proof, using properties of first-order RM codes, which could be of independent interest.
\begin{lemma}\label{lem:Q1}
     We have that
$
        \PR{ \cQ_1}  \leq (2 ^ { m -r + 2  } - 1 ) Z_1^{2^{m-r}}.
$
   %  Hence, it follows that
   % \begin{equation}
   %       \PR{ \overline { \cQ_1} } \geq 1 -  (2 ^ { m -r + 2  } - 1 ) Z_1^{2^{m-r}}. 
   % \end{equation}

   % here the $2^{m-r}$ in the power above $Z_1$ stems from $ d_{min} (RM(m-r+1, 1 ) ) =  2^{m-r} $.  
\end{lemma}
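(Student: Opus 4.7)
The plan is to bound $\PR{\cQ_1}$ via a union bound over the nonzero codewords of $\text{RM}(m-r+1,1)$, combined with the weight distribution of first-order RM codes. Since the base case uses the FHT decoder, which implements ML decoding over the BMS channel $W_1$, and since (as in Lemma 3 part 2) projection to a one-dimensional subspace induces a BMS channel, I am in a standard ML-over-BMS setting. The event $\cQ_1$ (conditioned on the all-zeros codeword being transmitted along the coset) occurs exactly when some nonzero codeword $c \in \text{RM}(m-r+1,1)$ has likelihood at least as large as that of $\mathbf{0}$.

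The first step is to write
\[
\PR{\cQ_1} \;\le\; \sum_{\substack{c \in \text{RM}(m-r+1,1)\\ c\neq \mathbf{0}}} \PR{\text{ML prefers } c \text{ to } \mathbf{0}}.
\]
The second step is to bound each pairwise error probability using the standard Bhattacharyya/Chernoff argument for BMS channels: for any codeword $c$ of Hamming weight $w(c)$, the pairwise error probability satisfies
\[
\PR{\text{ML prefers } c \text{ to } \mathbf{0}} \;\le\; Z_1^{w(c)}.
\]
This uses the coordinate-wise product structure of the channel law and the symmetry of $W_1$.

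The third step is where the ``direct'' use of the structure of first-order RM codes enters. The code $\text{RM}(m-r+1,1)$ has exactly one nonzero codeword of weight $N_1 = 2^{m-r+1}$ (the all-ones codeword), while all remaining $2^{m-r+2} - 2$ nonzero codewords have weight $N_1/2 = 2^{m-r}$. Substituting into the union bound,
\[
\PR{\cQ_1} \;\le\; Z_1^{2^{m-r+1}} + \bigl(2^{m-r+2} - 2\bigr)\, Z_1^{2^{m-r}}.
\]
Since $Z_1 \in (0,1)$, we have $Z_1^{2^{m-r+1}} = \bigl(Z_1^{2^{m-r}}\bigr)^2 \le Z_1^{2^{m-r}}$, which absorbs the first term into the second and yields the claimed $(2^{m-r+2} - 1)\,Z_1^{2^{m-r}}$.

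There is no significant obstacle here; the only technical care is in justifying that the channel law seen by the base-case FHT decoder is indeed the product of $N_1$ independent uses of the induced BMS channel $W_1$, so that the Bhattacharyya bound applies coordinate-wise. This is exactly what Lemma 3 (parts 1 and 2) provides for a single projection step, and the independence across the $N_1$ cosets follows from the memorylessness of the original channel $W$.
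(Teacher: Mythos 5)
Your proof is correct, and it is exactly the ``standard union bound argument for the ML error probability'' that the paper itself flags as an alternative route just before stating the lemma; the paper instead gives what it calls a direct proof through the FHT decoder. The mathematical content of the two arguments is the same: a union bound over the $2^{m-r+2}-1$ nonzero alternatives, a Chernoff/Markov bound with exponent $1/2$ on each pairwise error event (which is precisely where $Z_1$ raised to the codeword weight appears), and the weight structure of RM$(m-r+1,1)$ (every nonzero codeword has weight $2^{m-r}$ except the all-ones word of weight $2^{m-r+1}$, whose contribution $Z_1^{2^{m-r+1}}\le Z_1^{2^{m-r}}$ is absorbed the same way in both proofs). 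The difference is purely in packaging: the paper rederives these ingredients explicitly, identifying codewords with signed characters $\sigma\cdot\chi_s$, writing the pairwise error event as $\sum_{z:\,\sigma\chi_s(z)=-1}L(z)<0$, factorizing $\E{e^{-\frac{1}{2}\sum L(z)}}$ over coordinates, and computing $\E{e^{-L(z)/2}}=Z_1$ directly, whereas you invoke the pairwise Bhattacharyya bound and the first-order RM weight enumerator as known facts. What the paper's version buys is self-containedness and an explicit link to the FHT statistic $\langle L,\sigma\cdot\chi_s\rangle$; what yours buys is brevity and a cleaner appeal to standard ML-decoding machinery. Your closing caveat about the coordinate channels at the leaf being $N_1$ independent copies of the induced BMS channel is the same (correct) independence assumption the paper uses when it factorizes the expectation of the product of $e^{-L(z)/2}$ terms.
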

\begin{proof}

Via standard arguments (see, e.g., \cite[Sec. IV-A]{rpa-analysis-isit}), there exists a one-one correspondence between the codewords of $\text{\normalfont RM}(m-r+1,1)$ with the functions 
$\sigma\cdot \chi_s$ where $s \in \{0,1\}^{m-r+1}$ and $\sigma \in \{\pm 1\}$, with $\chi_{{s}}({x}):=(-1)^{{x}\cdot {s}}$, ${x}\in \{0,1\}^{m-r+1}$.  
The Fast Hadamard Transform (FHT) used by the RPA decoder then computes
\begin{equation*}
    (\sigma, s) =\argmax_{\sigma, s} \langle L , \sigma\cdot \chi_s \rangle.
\end{equation*}
Here, for functions $f, g: \{0,1\}^n\to \{-1,1\}$, we define their inner product
$
\langle f,g\rangle:= \frac{1}{2^n}\cdot \sum_{{x}\in \{0,1\}^n} f({x})g({x}),
$
and $L$ denotes the vector of log-likelihood ratios $(L(z))$, for $z\in \{0,1\}^{m-r+1}$, with $L(z):=\log \frac{W(Y_z|0)}{W(Y_z|1)}$.
It can be checked that the ``correct" codeword for any leaf node of $\mathcal{T}$ is the all-zeros codeword $\mathbf{0}$, which corresponds to $(\sigma,s)= (1,\mathbf{0})$. For any pair $(\sigma,s)\neq ( 1 , \mathbf{0})$, define the event, $ \cE_{\sigma,s} = \indic{ \langle L,\sigma \cdot \chi_s \rangle\geq \langle L, \chi_{\mathbf{0}} \rangle  }$. The FHT decoder makes an error iff $ \cE_{\sigma,s}$ occurs, for some $(\sigma,s)\neq ( 1 , \mathbf{0})$. Thus, via a union bound, 
\begin{equation}
     \PR{ \cQ_1}  \leq   \sum_{(\sigma, s)\neq (1,\mathbf{0})}  \PR{ \cE_{\sigma,s} }. \label{eq:3}
\end{equation}
Observe that
$
\langle L , \sigma\cdot \chi_s \rangle - \langle L , \chi_{\mathbf{0}} \rangle
    = - \sum_{z : \sigma\cdot \chi_s(z) = -1} L(z).
$
Hence the event $ \cE_{\sigma ,s }$ is equivalent to, 
\begin{align}
\label{eq:1}
    \cE_{\sigma,s} & = \indic{ \sum_{z : \sigma\cdot \chi_s(z) = -1} L(z) < 0 } \nonumber \\
    & = \indic{ e^{- \frac{\sum_{z : \sigma\cdot \chi_s(z) = -1} L(z)}{2}} >  1}.
\end{align}
Therefore by applying the Markov inequality to \eqref{eq:1}, we have
\begin{align}
\PR{\cE_{\sigma,s}} &\leq \E{ e^{- \frac{\sum_{z : \sigma\cdot \chi_s(z) = -1} L(z)}{2}}} \notag\\ &= \prod_{z : \sigma\cdot \chi_s(z) = -1} \E{ e^{- \frac{ L(z)}{2}}}. \label{eq:2}
\end{align}
We reiterate that all probabilities and expected values above are conditioned on $\mathbf{0}$ being transmitted. The last step holds since $L(z)$ are i.i.d. across $z\in \{0,1\}^{m-r+1}$.
Now, observe that 
\begin{align*}
\E{e^{-L(z)/2}}
    &= \sum_{y} W(y|0) \sqrt{\frac{W(y|1)}{W(y|0)}} \\
    &= \sum_{y} \sqrt{W(y|0) W(y|1)} = Z_1.
\end{align*}
Therefore, following on from \eqref{eq:2},
\begin{equation*}
    \prod_{z : \sigma \chi_s(z) = -1} \E{ e^{- \frac{L(z)}{2}}} = Z_1^{\lvert\{z : \sigma\cdot \chi_s(z)=-1\}\rvert }.
\end{equation*}
We now bound the exponent ${\lvert\{z : \sigma\cdot \chi_s(z)=-1\}\rvert}$. 
For any nonzero $s$, $\chi_s$ has equal number of $+1 $ and $-1$. As a result, ${\lvert\{z : \sigma\cdot \chi_s(z)=-1\}\rvert}  = 2^{m-r}$ for all $ s\neq \mathbf{0}$ because multiplying by $\sigma$ flips the signs globally. Moreover, when $ s = \mathbf{0} $, we have that when $ \sigma = -1 $, ${\lvert\{z : \sigma \cdot \chi_s(z)=-1\}\rvert} = 2^{m-r+1}$. Therefore, 
% \begin{equation}
%    \PR{  \cE_{-1,0} } \le
%   Z_1^{2^{m-r+1}}
%   \le
%   Z_1^{2^{m-r}}.
% \end{equation}
% Putting everything together, we get that
\begin{equation*}
    \PR{\cE_{\sigma,s}}\le Z_1^{2^{m-r}}, \ \text{for all } (\sigma,s) \neq (1,0).
\end{equation*}
Finally, via \eqref{eq:3}, we obtain that
$
        \PR{\cQ_1}  \leq (2 ^ { m -r + 2  } - 1 ) Z_1^{2^{m-r}}.
$
\end{proof}

\begin{lemma}
 \label{lem:Qi-rec}
For $ i \geq  2 $, we have
    \begin{equation*}
        \PR{\cQ_i } \leq N_i Z_i^{N_i -1 }  + (N_i - 1 ) \PR{\cQ_{i-1}}.
    \end{equation*}
\end{lemma}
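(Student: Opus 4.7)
The plan is to follow the standard RPA error decomposition and split the error event at node $(i,j)$ using the good event $\cG_i$:
\[
\PR{\cQ_i} \;\le\; \PR{\cQ_i \cap \cG_i} \;+\; \PR{\overline{\cG_i}}.
\]
The node $(i,j)$ has exactly $N_i-1$ children, one per nonzero one-dimensional subspace of $\F_2^{m-r+i}$, and the error event at any such child has probability at most $\PR{\cQ_{i-1}}$. A union bound then yields $\PR{\overline{\cG_i}} \le (N_i-1)\PR{\cQ_{i-1}}$, producing the second term in the claimed bound.

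For $\PR{\cQ_i \cap \cG_i}$ I would exploit the all-zeros transmission assumption. On $\cG_i$, every child returns the correct (all-zeros) projected codeword, so $\widehat{y}^{(v)}(T) = 0$ for each $v$ and each coset $T$; the factors $(1 - 2\widehat{y}^{(v)}([z + \langle v\rangle]))$ in the aggregation step of Algorithm~\ref{alg:RPA-BMS-r1} all collapse to $+1$, and the aggregated LLR reduces to the oracle expression
\[
\widehat{L}_i^{(j)}(z) \;=\; \frac{1}{N_i - 1}\sum_{v \neq 0} L_i^{(j)}(z \oplus v).
\]
Since $\cQ_i \cap \cG_i$ is then contained in the event that $\widehat{L}_i^{(j)}(z) < 0$ for some $z$, a union bound over the $N_i$ positions followed by the same Markov/Chernoff step used in Lemma~\ref{lem:Q1} gives
\[
\PR{\cQ_i \cap \cG_i} \;\le\; N_i \prod_{v \neq 0} \E{e^{-L_i^{(j)}(z \oplus v)/2}}.
\]

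The main technical point, and the step I would be most careful about, is justifying that the $N_i$ coordinates of $L_i^{(j)}$ are mutually independent and each distributed as the LLR of the induced channel $W_i$ at input $0$. Unrolling the projection operations along the path from the root to $(i,j)$ produces a chain of subspaces $\{0\}\subset V_1\subset\cdots\subset V_{r-i}$ in $\F_2^m$ with $\dim V_k = k$, and each coordinate $L_i^{(j)}(z)$ is a deterministic function of the channel outputs indexed by the coset $z + V_{r-i}$. Distinct coordinates therefore depend on disjoint sets of outputs, so the memorylessness of $W$ gives independence, while iterating Lemma~\ref{lem:induced-channel}(2) along the path identifies each coordinate's law with the $W_i$-LLR at input $0$. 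The identity $\E{e^{-L/2}} = Z(W_i) = Z_i$ (computed exactly as in the proof of Lemma~\ref{lem:Q1}) then yields $\PR{\cQ_i \cap \cG_i} \le N_i Z_i^{N_i-1}$, and combining this with the earlier bound on $\PR{\overline{\cG_i}}$ completes the proof.
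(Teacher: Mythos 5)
Your proposal is correct and follows essentially the same route as the paper: the same decomposition via the good event $\cG_i$, the same union bound over the $N_i-1$ children giving the $(N_i-1)\PR{\cQ_{i-1}}$ term, and the same Markov/Chernoff step on the oracle-aggregated LLR with $\E{e^{-L/2}}=Z_i$ giving the $N_i Z_i^{N_i-1}$ term. Your explicit justification of the independence of the coordinates of $L_i^{(j)}$ via disjoint cosets is a point the paper leaves implicit, but it is not a different argument.
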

\begin{proof}
% Fix any $  1 \leq j\leq N_i -1 $ and 
% assume $ Q_i = Q_i^(j)$, as the index is chosen arbitrarily the claim will hold for all $(j) $.  
% let $L_i^ { (j)}  $ be the LLR vector at node $(i,j)$ and $\widehat{Y}_i^{(j)}$ be the hard decision estimate of the instance of  RM$(m-r+i,i)$ before the aggregation step. 
We know that
\begin{equation}
    \PR{\cQ_i } \leq \PR{\cQ_i | \cG_i } + \PR{\overline{\cG_i}}. 
\end{equation}
We will prove the lemma by showing the following two claims hold: Firstly, that
\begin{equation}
\label{eq:4}
    \PR{\overline{ \cG_i}} \leq (N_i -1 ) \PR{\cQ_{i-1}},
\end{equation}
and secondly, that 
\begin{equation}
\label{eq:5}
    \PR{\cQ_i | \cG_i } \leq N_i Z_i ^{ N_i -1 }.
\end{equation} 
To prove the first claim, notice that the event $\cG_i$ is equivalent to the event $ \cap_{j' \in \textbf{ch}( i,j ) }  Q_{i+1}^{j'}$, where $\textbf{ch}( i,j )$ denotes the collection of children of node $(i,j)$. Hence by the union bound, we get that
\begin{equation*}
    \PR{\overline{ \cG_i}} \leq \sum_{j'\textbf{ch}( i,j ) } \PR{\cQ_{i+1}^{(j)}} \leq ( N_i-1 ) \PR{\cQ_{i+1}},
 \end{equation*}
thereby proving \eqref{eq:4}.

Now we move on to proving the second claim \eqref{eq:5}. 
 Let $\mathcal{F}_z$ for $ z \in \F_2^{m-r+i}$ be the event that $\widehat{Y}_i^{(j)}(z) \neq 0$. Now, via arguments similar to those in \cite[Sec. IV-B]{rpa-analysis-isit}, it can be checked that conditioned on the event $\mathcal{G}_i$, the ``aggregated" LLR vector $\widehat{L}_i^{(j)}$ at node $(i,j)$ obeys
\begin{equation}
    \widehat{L}_i^{(j)}(z) = \frac{1}{2^{m-r+i }-1} \sum_{z'\neq z } L_i^{(j)}(z'). \notag
\end{equation}
Let $\overline{L}_i^{(j)}(z):= (2^{m-r+i }-1)\cdot \widehat{L}_i^{(j)}(z)$. We then have via the Markov inequality that
    \begin{align}
         \PR{\mathcal{F}_z| \cG_i} & =  \PR{ \overline{L}_i^{(j)}(z)  <0 }  \notag \\ 
        % &= \PR{ e^{-\overline{L}_i^{(j)}(z)/2} >1 }  \\
        & \leq \E{e^{-\overline{L}_i^{(j)}(z)/2} } \notag\\
        &= \prod_{z\neq z'}  \E{e^{L_i^{(j)}(z')/2}} = Z( W_i ) ^{ 2^{m-r+i} -1 }. \notag
    \end{align}

Now, conditioned on $\cG_i$, note that we have $Q_i = \cup_{z\in \F_2^{m-r+i}} \mathcal{F}_z$. Hence, 
\begin{align*}
\PR{ { \cQ_i } | \cG_i} &\leq  \sum_{z\in \F_2^{m-r+i}} \PR{ \mathcal{F}_z | \cG_i}\\ &= 2^{m-r+i } Z( W_i ) ^{ 2^{m-r+i} -1 },
\end{align*}
thereby proving our second claim. 
\end{proof}
\subsection{Explicit Upper Bound on $\PR{ \cQ_i }$}
The following lemma then follows by ``unrolling" the recursion in Lemma \ref{lem:Qi-rec}.
\begin{lemma}\label{lem:Qi-unroll}
Define for each $t\in\{2,\dots,r\}$,
\[
A_t := 2^{m-r+t}\, Z_t^{\,2^{m-r+t}-1},
\qquad
B_t := 2^{m-r+t}-1.
\]
Then for every $i\in\{2,\dots,r\}$,
\begin{equation*}\label{eq:Qi-unrolled}
\PR{\cQ_i}
\le
\sum_{t=2}^{i}\Bigg( A_t \prod_{s=t+1}^{i} B_s \Bigg)
+
\PR{\cQ_1}\prod_{s=2}^{i} B_s .
\end{equation*}
In particular, 
\begin{equation*}\label{eq:Qr-unrolled}
\PR{\cQ_r}
\le
\sum_{t=2}^{r}\Bigg( A_t \prod_{s=t+1}^{r} B_s \Bigg)
+
\PR{\cQ_1}\prod_{s=2}^{r} B_s .
\end{equation*}
\end{lemma}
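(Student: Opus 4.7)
The plan is a straightforward induction on $i$, using the one-step recurrence of Lemma~\ref{lem:Qi-rec}, rewritten in the compact form
\[
\PR{\cQ_i} \;\le\; A_i + B_i\,\PR{\cQ_{i-1}}, \qquad i\ge 2,
\]
with $A_i = 2^{m-r+i} Z_i^{2^{m-r+i}-1}$ and $B_i = 2^{m-r+i}-1$ as defined in the statement. This is a first-order linear recurrence with non-constant coefficients, whose solution is well-known to be the sum of a ``particular'' part (involving the forcing terms $A_t$) plus a ``homogeneous'' part proportional to the initial condition $\PR{\cQ_1}$.

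First, I would check the base case $i=2$: by Lemma~\ref{lem:Qi-rec} we have $\PR{\cQ_2} \le A_2 + B_2 \PR{\cQ_1}$, which matches the claimed formula once one adopts the convention that the empty product $\prod_{s=3}^{2} B_s$ equals $1$. Next, assuming the formula holds at level $i-1$, I would substitute the inductive hypothesis into $\PR{\cQ_i} \le A_i + B_i \PR{\cQ_{i-1}}$ to obtain
\begin{align*}
\PR{\cQ_i}
&\le A_i + B_i\!\left[\sum_{t=2}^{i-1} A_t \prod_{s=t+1}^{i-1} B_s + \PR{\cQ_1}\prod_{s=2}^{i-1} B_s\right] \\
&= A_i + \sum_{t=2}^{i-1} A_t \prod_{s=t+1}^{i} B_s + \PR{\cQ_1}\prod_{s=2}^{i} B_s.
\end{align*}
I would then absorb the leading $A_i$ into the sum as the $t=i$ summand (again using the empty-product convention $\prod_{s=i+1}^{i} B_s = 1$), giving exactly the claimed expression for $\PR{\cQ_i}$. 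Setting $i=r$ then yields the ``In particular'' statement.

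There is essentially no obstacle here: the recurrence is linear and non-homogeneous with multiplicative coefficients $B_i$, so unrolling is mechanical. The only point that requires a small amount of care is the bookkeeping of the index ranges, specifically the convention on empty products (appearing both at the base case and when reabsorbing the $A_i$ term in the inductive step). Once that convention is fixed, both the base case and the inductive step follow by direct substitution, and no properties of $A_t, B_t$ beyond non-negativity are used.
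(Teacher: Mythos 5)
Your proof is correct and is exactly the argument the paper intends: the paper gives no explicit proof, stating only that the lemma ``follows by unrolling the recursion in Lemma~\ref{lem:Qi-rec},'' and your induction with the empty-product convention (using $B_s \ge 0$ to preserve the inequality) is the standard way to make that unrolling precise.
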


% \begin{proof}
% Lemma~\ref{lem:Qi-rec} gives the one-step recursion
% \[
% \Pr[\cQ_i] \le A_i + B_i \Pr[\cQ_{i-1}],\qquad i\ge 2.
% \]
% Iterating this inequality yields
% \[
% \Pr[\cQ_i]
% \le A_i + B_i A_{i-1} + B_iB_{i-1}A_{i-2}+\cdots + \Big(\prod_{s=2}^{i}B_s\Big)\Pr[\cQ_1],
% \]
% which is exactly \eqref{eq:Qi-unrolled}.
% \end{proof}
\section{Proof of Main Result}

In this section we will prove Thm.~\ref{thm:main}. 

\begin{proof}[Proof of Thm.~\ref{thm:main}]
    For each $ t \in  \{ 2 , \ldots, r \}$ from Lemma~\ref{lem:bhat-unroll} we have that $A_t := 2^{m-r+t}  Z_t^{2^{m-r+t}-1}$ and $B_t := 2^{m-r+t}-1 \leq 2^{m-r+t}.$
    % \begin{equation}
    %     A_t := 2^{m-r+t}  Z_t^{2^{m-r+t}-1},\ B_t := 2^{m-r+t}-1 \leq 2^{m-r+t}.  
    % \end{equation}
    % We now show that $A_t$ is very small due to its containing powers of $Z_t$ and that multiples of $B_t$ are negligible, besides. 
    Observe that for any $u\in \{2,\ldots,r\}$, we have
    \begin{align*}
       \prod_{s=u}^r B_s &=  \prod_{s=u}^r 2^{m-r+s} \\&=2^{(r-u+1)(m-r)+\sum_{s=u}^r s}
\leq 2^{O(mr)}.
    \end{align*}
Thus, from Lemma \ref{lem:Qi-unroll},
\begin{equation}
\label{eq:temp}
    \PR{\cQ_r}
\leq 2^{O(mr)} \PR{ \cQ_1} + \sum_{t=2 }^r 2^{ O(mr)} Z_t^{2^{m-r+t}-1}.
\end{equation}

We will individually show that each term above approaches zero in the large $m$ limit, for $r \le \log_2 m -\log_2 \lambda-\delta$, for some small constant $\delta$. For the first term we will show that $\PR{\cQ_1}$ is very small compared to its multiplicative pre-factor that is $2^{ O(mr)}$. To this end, recall that $ \lambda := -\ln( 1- Z) = -\ln( 1- Z_r)$. From Lemma~\ref{lem:bhat-unroll}, we know that
$
    ( 1 - Z_t ) \geq ( 1 - Z_r)^{2^{r-t}} = e^{-\lambda \cdot 2^{r-t}}.
$
Now, from Lemma~\ref{lem:Q1}, we obtain using the previous inequality that
\begin{equation*}\label{eq:asym-Q1}
\begin{aligned}
        \PR{ \cQ_1} % &\leq 2 ^ { m -r + 2  }\cdot Z_1^{2^{m-r}}\\
       & \leq  2^{m-r+2} e^{-(2^{m-r}\cdot  e^{  -\lambda\cdot 2^{r-1}})}.
       \end{aligned}
\end{equation*} 
Here, we use the fact that for any positive integer $k$, $ Z_t^{k} \leq e^{ -k ( 1- Z_t)  } $. Notice that in the regime $r \le \log_2 m -\log_2 \lambda -\delta$, we have $ \lambda\cdot 2^{r-1} = O(m)$. Therefore,
$
      2^{m-r}\cdot\exp(-\lambda 2^{r-1})=\exp(\Omega(m)).
$
Plugging this into the first term in \eqref{eq:temp}, we get that
\begin{equation*}
    \begin{aligned}
        2^{O(mr)} \PR{ \cQ_1}  &\leq 2^{O(mr)}  2^{m-r+2} e^{-(2^{m-r}  e^{  -\lambda 2^{r-1}})}\\
        & \leq 2^{O(mr)} e^{-e^{\Omega (m) }}\xrightarrow{m\to \infty} 0.
    \end{aligned}
\end{equation*}
% The second part being double exponential guarantees that, $2^{O(mr)} \PR{ \cQ_1} \rightarrow 0$ as $ m \rightarrow \infty$. 

Now we focus on the second summation in \eqref{eq:temp}; we will show that each summand approaches zero. So fix any $t$, define $b_t = 2^{ m-r + t } - 1 $. Then, since
$
    Z_t^{b_t} \leq e^{- b_t(1-Z_t) },
$
we have that
\begin{equation*}
\begin{aligned}  
        (1-Z_t ) b_t  &\geq \exp (-\lambda 2^{r-t} ) (  2^{m-r+t}  - 1 ) \\
        &\geq   2^{m-r+t-1 }\cdot \exp (-\lambda\cdot 2^{r-t} )\\
        &= \exp\left({ ( m - r + t -1 )\ln 2 - \lambda\cdot 2^{r-t}}\right).
 \end{aligned}
\end{equation*}
Again using the fact that $ \lambda\cdot 2^{r-t}\leq \lambda\cdot 2^{r-1} = o(m) $, we get that
\begin{equation*}
    ( m - r + t -1 )\ln 2 - \lambda\cdot 2^{r-t} = \Theta(m) - o(m) = \Omega(m). 
\end{equation*}
Plugging this into any term in the summation in \eqref{eq:temp}, we get that 
\begin{equation*}
    \begin{aligned}
        2^{ O(mr) } Z_t^{2^ { m-r+t} -1 }  &\leq 2^{O(mr) } e^{ -\exp( m - r + t -1 )\ln 2 - \lambda 2^{r-t})} \\ 
        % &= \exp( O (mr) \cdot \exp ( -\exp ( \Omega(m)) )\\ 
        &\xrightarrow{m\to \infty} 0.
    \end{aligned}
\end{equation*}
This concludes the proof of Theorem \ref{thm:main}.
\end{proof}

\section*{Acknowledgments}
V.~A.~Rameshwar acknowledges support from the New Faculty Initiation Grant, IIT Madras.  D.~Fathollahi acknowledges support from the National Science Foundation (NSF) under Grant No.~CCF-2231157.
The authors thank Harshithanjani Athi for helpful discussions on the problem.

\bibliographystyle{IEEEtran}
{\footnotesize
	\bibliography{references}}

@INPROCEEDINGS{rpa,
	author={Ye, Min and Abbe, Emmanuel},
	booktitle={2019 IEEE International Symposium on Information Theory (ISIT)}, 
	title={Recursive projection-aggregation decoding of {Reed-Muller} codes}, 
	year={2019},
	volume={},
	number={},
	pages={2064-2068},
	keywords={Decoding;Iterative decoding;Error probability;AWGN channels;Memoryless systems;Frequency modulation},
	doi={10.1109/ISIT.2019.8849269}}

@article{sason-shamai,
author = {Sason, Igal and Shamai, Shlomo},
title = {Performance analysis of linear codes under maximum-likelihood decoding: a tutorial},
year = {2006},
issue_date = {July 2006},
publisher = {Now Publishers Inc.},
address = {Hanover, MA, USA},
volume = {3},
number = {1/2},
url = {https://doi.org/10.1561/0100000009},
doi = {10.1561/0100000009},
abstract = {This article is focused on the performance evaluation of linear codes under optimal maximum-likelihood (ML) decoding. Though the ML decoding algorithm is prohibitively complex for most practical codes, their performance analysis under ML decoding allows to predict their performance without resorting to computer simulations. It also provides a benchmark for testing the sub-optimality of iterative (or other practical) decoding algorithms. This analysis also establishes the goodness of linear codes (or ensembles), determined by the gap between their achievable rates under optimal ML decoding and information theoretical limits. In this article, upper and lower bounds on the error probability of linear codes under ML decoding are surveyed and applied to codes and ensembles of codes on graphs. For upper bounds, we discuss various bounds where focus is put on Gallager bounding techniques and their relation to a variety of other reported bounds. Within the class of lower bounds, we address de Caen's based bounds and their improvements, and also consider sphere-packing bounds with their recent improvements targeting codes of moderate block lengths.},
journal = {Commun. Inf. Theory},
month = jul,
pages = {1–222},
numpages = {222}
}

@article{rm-survey,
author = {Abbe, Emmanuel and Sberlo, Ori and Shpilka, Amir and Ye, Min},
title = {Reed-Muller Codes},
year = {2023},
issue_date = {Jan 2023},
publisher = {Now Publishers Inc.},
address = {Hanover, MA, USA},
volume = {20},
number = {1–2},
issn = {1567-2190},
url = {https://doi.org/10.1561/0100000123},
doi = {10.1561/0100000123},
abstract = {Reed-Muller (RM) codes are among the oldest, simplest and perhaps most ubiquitous family of codes. They are used in many areas of coding theory in both electrical engineering and computer science. Yet, many of their important properties are still under investigation. This work covers some of the developments regarding the weight enumerator and the capacity-achieving properties of RM codes, as well as some of the algorithmic developments. In particular, it discusses connections established between RM codes, thresholds of Boolean functions, polarization theory, hypercontractivity, and the techniques of approximating low weight codewords using lower degree polynomials (when codewords are viewed as evaluation vectors of degree r polynomials in m variables). It then overviews some of the algorithms for decoding RM codes, giving both algorithms with provable performance guarantees for every block length, as well as algorithms with state-of-the-art performances in practical regimes, which do not perform as well for large block length. Finally, some applications of RM codes in theoretical computer science and signal processing are given.},
journal = {Found. Trends Commun. Inf. Theory},
month = jan,
pages = {1–156},
numpages = {172}
}

@ARTICLE{dumer_burnashev,
	author={Burnashev, M. and Dumer, I.},
	journal={IEEE Transactions on Information Theory}, 
	title={Error Exponents for Recursive Decoding of {Reed–Muller} Codes on a Binary-Symmetric Channel}, 
	year={2006},
	volume={52},
	number={11},
	pages={4880-4891},
	keywords={Decoding;Error probability;Algorithm design and analysis;Hamming distance;Error correction codes;Performance analysis;Recursive estimation;Error analysis;Binary-symmetric channel;Chernoff bound;Plotkin construction;recursive decoding;Reed–Muller (RM) codes},
	doi={10.1109/TIT.2006.883557}}

@ARTICLE{rm_survey,
	author={Abbe, Emmanuel and Shpilka, Amir and Ye, Min},
	journal={IEEE Transactions on Information Theory}, 
	title={Reed-{M}uller Codes: Theory and Algorithms}, 
	year={2021},
	volume={67},
	number={6},
	pages={3251-3277},
	doi={10.1109/TIT.2020.3004749}}

@book{mws,
	added-at = {2010-03-21T11:37:14.000+0100},
	author = {F.~J.~MacWilliams and N.~J.~A.~Sloane},
	biburl = {https://www.bibsonomy.org/bibtex/2a53f5d82eee126a9b3fbb3e8e5932ffb/heartsoar},
	edition = {2nd},
	interhash = {060a2edf347843658d3b63b0981a19cd},
	intrahash = {a53f5d82eee126a9b3fbb3e8e5932ffb},
	keywords = {},
	publisher = {North-Holland},
	timestamp = {2010-03-21T11:37:14.000+0100},
	title = {The Theory of Error-Correcting Codes},
	year = 1978
}

@ARTICLE{fht,
	author={Be'ery, Y. and Snyders, J.},
	journal={IEEE Transactions on Information Theory}, 
	title={Optimal soft decision block decoders based on fast {Hadamard} transform}, 
	year={1986},
	volume={32},
	number={3},
	pages={355-364},
	keywords={},
	doi={10.1109/TIT.1986.1057189}}

@ARTICLE{muller,
	author={Muller, D. E.},
	journal={Transactions of the I.R.E. Professional Group on Electronic Computers}, 
	title={Application of Boolean algebra to switching circuit design and to error detection}, 
	year={1954},
	volume={EC-3},
	number={3},
	pages={6-12},
	keywords={Switching circuits;Systematics;Symbols;Voltage;Switches;Reviews;Relays},
	doi={10.1109/IREPGELC.1954.6499441}}

@ARTICLE{reed,
	author={Reed, I.},
	journal={Transactions of the IRE Professional Group on Information Theory}, 
	title={A class of multiple-error-correcting codes and the decoding scheme}, 
	year={1954},
	volume={4},
	number={4},
	pages={38-49},
	keywords={Decoding;Error correction codes;Laboratories;Testing;Mathematical analysis;Hamming distance;Boolean algebra},
	doi={10.1109/TIT.1954.1057465}}

@ARTICLE{Reeves,
	author={Reeves, Galen and Pfister, Henry D.},
	journal={IEEE Transactions on Information Theory}, 
	title={Reed–{Muller} Codes on {BMS} Channels Achieve Vanishing Bit-Error Probability for All Rates Below Capacity}, 
	year={2023},
	volume={},
	number={},
	pages={1-1},
	doi={10.1109/TIT.2023.3286452}}

@INPROCEEDINGS{abbesandon,
	author={Abbe, Emmanuel and Sandon, Colin},
	booktitle={2023 IEEE 64th Annual Symposium on Foundations of Computer Science (FOCS)}, 
	title={A proof that {{Reed-Muller} codes achieve Shannon} capacity on symmetric channels}, 
	year={2023},
	volume={},
	number={},
	pages={177-193},
	keywords={Computer science;Codes;Additives;Channel capacity;{Reed-Muller} codes;Probabilistic logic;Boosting;Coding;Shannon capacity;{Reed-Muller} codes;boosting;sunflowers;Fourier analysis;symmetries;thresholds.},
	doi={10.1109/FOCS57990.2023.00020}}

@ARTICLE{kud1,
	author={Kudekar, Shrinivas and Kumar, Santhosh and Mondelli, Marco and Pfister, Henry D. and Sasoglu, Eren and Urbanke, R\"udiger L.},
	journal={IEEE Transactions on Information Theory}, 
	title={Reed-{M}uller Codes Achieve Capacity on Erasure Channels}, 
	year={2017},
	volume={63},
	number={7},
	pages={4298-4316},
	doi={10.1109/TIT.2017.2673829}}

@ARTICLE{polar,
	author={Arikan, Erdal},
	journal={IEEE Transactions on Information Theory}, 
	title={Channel Polarization: A Method for Constructing Capacity-Achieving Codes for Symmetric Binary-Input Memoryless Channels}, 
	year={2009},
	volume={55},
	number={7},
	pages={3051-3073},
	doi={10.1109/TIT.2009.2021379}}

@article{fht2,
author = {R. R. Green},
journal = {JPL Space Programs Summary},
title = {A serial orthogonal decoder},
year = {1966},
volume = {37–39-IV},
pages = {247–253},
}

@INPROCEEDINGS{sakkour,
	author={Bassem Sakkour},
	booktitle={IEEE Information Theory Workshop}, 
	title={Decoding of second order {Reed-Muller} codes with a large number of errors}, 
	year={2005},
	volume={},
	number={},
	pages={3},
	keywords={Maximum likelihood decoding;Error correction;Algorithm design and analysis;Error correction codes;Communication channels;Character generation;Hamming distance;Euclidean distance;Linearity;Encoding},
	doi={10.1109/ITW.2005.1531882}}

@article{sidelnikov,
	author = {V. M. Sidel'nikov and A. S. Pershakov},
	journal = {Problemy Peredachi Informatsii},
	title = {Decoding of {Reed-Muller} codes
with a large number of errors},
	year = {1992},
	volume = {28},
	number = {3},
	pages = {80–94}
}

@ARTICLE{dumer2,
	author={Dumer, I.},
	journal={IEEE Transactions on Information Theory}, 
	title={Soft-decision decoding of {Reed-Muller} codes: a simplified algorithm}, 
	year={2006},
	volume={52},
	number={3},
	pages={954-963},
	keywords={Decoding;Memoryless systems;Algorithm design and analysis;Error correction;Hamming distance;Communication system control;Codes;Extraterrestrial measurements;Euclidean distance;Decoding threshold;memoryless channel;Plotkin construction;recursive decoding;Reed–Muller (RM) codes},
	doi={10.1109/TIT.2005.864425}}

@ARTICLE{dumer3,
	author={Dumer, I. and Shabunov, K.},
	journal={IEEE Transactions on Information Theory}, 
	title={Soft-decision decoding of {Reed-Muller} codes: recursive lists}, 
	year={2006},
	volume={52},
	number={3},
	pages={1260-1266},
	keywords={Turbo codes;Polynomials;Galois fields;Sun;Personal communication networks;Vehicular and wireless technologies;Cities and towns;Upper bound;Iterative decoding;Computer architecture;Maximum-likelihood (ML) performance;Plotkin construction;posterior probabilities;recursive lists;Reed–Muller (RM) codes},
	doi={10.1109/TIT.2005.864443}}

@ARTICLE{dumer1,
	author={Dumer, I.},
	journal={IEEE Transactions on Information Theory}, 
	title={Recursive decoding and its performance for low-rate {Reed-Muller} codes}, 
	year={2004},
	volume={50},
	number={5},
	pages={811-823},
	keywords={Maximum likelihood decoding;Algorithm design and analysis;Error correction;Error probability;Polynomials;Error correction codes;Communication system control;Encoding;Veins},
	doi={10.1109/TIT.2004.826632}}

@INPROCEEDINGS{rpacomplex1,
	author={Li, Jiajie and Abbas, Syed Mohsin and Tonnellier, Thibaud and Gross, Warren J.},
	booktitle={2021 11th International Symposium on Topics in Coding (ISTC)}, 
	title={Reduced Complexity {RPA} Decoder for {Reed-Muller} Codes}, 
	year={2021},
	volume={},
	number={},
	pages={1-5},
	keywords={Codes;Processor scheduling;Error analysis;{Reed-Muller} codes;Encoding;Computational complexity;Maximum likelihood decoding;{Reed-Muller} codes;maximum likelihood (ML) decoding;recursive projection-aggregation (RPA) decoder;short codes;early stopping criteria},
	doi={10.1109/ISTC49272.2021.9594060}}

@book{mct, place={Cambridge}, title={Modern Coding Theory}, DOI={10.1017/CBO9780511791338}, publisher={Cambridge University Press}, author={Richardson, Tom and Urbanke, Rüdiger}, year={2008}}

@INPROCEEDINGS{rpa-analysis-isit,
  author={Rameshwar, V. Arvind and Lalitha, V.},
  booktitle={2025 IEEE International Symposium on Information Theory (ISIT)}, 
  title={An Upper Bound on the Error Probability of Rpa Decoding of Reed-Muller Codes Over the Bsc}, 
  year={2025},
  volume={},
  number={},
  pages={1-6},
  keywords={Maximum likelihood estimation;Codes;Upper bound;Error probability;Reed-Muller codes;Iterative decoding;Maximum likelihood decoding;Convergence},
  doi={10.1109/ISIT63088.2025.11195324}}

@ARTICLE{coset-error-it,
  author={Zhang, Bin and Chen, Fanyun and Huang, Qin},
  journal={IEEE Transactions on Information Theory}, 
  title={Coset Error Pattern in Projection-Aggregation Decoding}, 
  year={2025},
  volume={71},
  number={8},
  pages={5920-5934},
  keywords={Frequency modulation;Iterative decoding;Codes;Vectors;Maximum likelihood decoding;Finite element analysis;Simulation;Reed-Muller codes;Complexity theory;Training;Reed-Muller codes;decoding algorithm;projection-aggregation decoding;coset error patterns;subspace selection},
  doi={10.1109/TIT.2025.3576779}}

@INPROCEEDINGS{rpacomplex2,
	author={Fathollahi, Dorsa and Farsad, Nariman and Hashemi, Seyyed Ali and Mondelli, Marco},
	booktitle={2021 IEEE International Symposium on Information Theory (ISIT)}, 
	title={Sparse Multi-Decoder Recursive Projection Aggregation for {Reed-Muller} Codes}, 
	year={2021},
	volume={},
	number={},
	pages={1082-1087},
	keywords={Cyclic redundancy check;Error probability;Simulation;{Reed-Muller} codes;Computational efficiency;Iterative decoding;Maximum likelihood decoding},
	doi={10.1109/ISIT45174.2021.9517887}}
\end{document}